\newtheorem{theorem}{Theorem}
\newtheorem{example}{Example}
\newtheorem{lemma}{Lemma}
\newtheorem{definition}{Definition}
\def\psfancypar#1#2{\begingroup\def\par{\endgraf\endgroup\lineskiplimit=0pt}
               \setbox2=\hbox{\large\sc #2}
               \newdimen\tmpht \tmpht \ht2 \advance\tmpht by \baselineskip
               \font\hhuge=Times-Bold at \tmpht
               \setbox1=\hbox{{\hhuge #1}}
               \count7=\tmpht \count8=\ht1
               \divide\count8 by 1000 \divide\count7 by \count8 
               \tmpht=.001\tmpht\multiply\tmpht by \count7 
               \font\hhuge=Times-Bold at \tmpht
               \setbox1=\hbox{{\hhuge #1}}
               \noindent
                \hangindent1.05\wd1
               \hangafter=-2 {\hskip-\hangindent
               \lower1\ht1\hbox{\raise1.0\ht2\copy1}%
                \kern-0\wd1}\copy2\lineskiplimit=-1000pt}
\newcommand{\beq}{\begin{equation}}
\newcommand{\eeq}{\end{equation}}
\newcommand{\bqa}{\begin{eqnarray}}
\newcommand{\eqa}{\end{eqnarray}}
\newcommand{\bqn}{\begin{eqnarray*}}
\newcommand{\eqn}{\end{eqnarray*}}
\newcommand{\be}{\begin{enumerate}}
\newcommand{\ee}{\end{enumerate}}
\newcommand{\bi}{\begin{itemize}}
\newcommand{\ei}{\end{itemize}}
\newcommand{\bd}{\begin{description}}
\newcommand{\ed}{\end{description}}
\newcommand{\ba}{\begin{array}}
\newcommand{\ea}{\end{array}}
\newcommand{\bde}{\begin{definition}}
\newcommand{\ede}{\end{definition}}
\newcommand{\bex}{\begin{example}}
\newcommand{\eex}{\end{example}}
\def\boxit#1{\vbox{\hrule\hbox{\vrule\kern3pt
        \vbox{\kern3pt#1\kern3pt}\kern3pt\vrule}\hrule}}
\def\reals{ { {\rm  I \kern-0.15em R }  } }
\def\complex{ {\,{{\rm C} \kern-0.50em \raise0.20ex {  |}}\, }}
\def\0bf{{\bf 0}}
\def\1bf{{\bf 1}}
\def\2bf{{\bf 2}}
\def\3bf{{\bf 3}}
\def\4bf{{\bf 4}}
\def\5bf{{\bf 5}}
\def\6bf{{\bf 6}}
\def\7bf{{\bf 7}}
\def\8bf{{\bf 8}}
\def\9bf{{\bf 9}}
\def\ybf{{\bf y}}
\def\ybf{{\bf y}}
\def\Rbf{{\bf R}}
\def\Cmat{\mathcal{C}}
\def\Emat{\mathcal{E}}
\def\Mmat{\mathcal{M}}
\def\Nmat{\mathcal{N}}
\def\Qmat{\mathcal{Q}}
\def\Rmat{\mathcal{R}}
\def\Umat{\mathcal{U}}
\def\Xmat{\mathcal{X}}
\def\Ymat{\mathcal{Y}}
\def\Rxx{\Rbf_{\ssstyle X\kern-.1em X}}
\let\ssstyle=\scriptscriptstyle
\def\Kout{\setbox1=\hbox{\Huge\bf K}\hbox to
1.05\wd1{\hspace{.05\wd1}
\def\Sout{\setbox1=\hbox{\Huge\bf S}\hbox to 1.05\wd1{\hspace{.05\wd1}

\def\scalefig#1{\epsfxsize #1\textwidth}
\makeatletter

\newcommand{\Rmnum}[1]{\expandafter\@slowromancap\romannumeral #1@}
\makeatother
\title{\LARGE Capacity Bounds and Sum Rate Capacities of a Class of Discrete Memoryless Interference Channels}
\author{Fangfang Zhu and Biao Chen\footnotemark}
\begin{document}
\maketitle
\let\thefootnote\relax\footnotetext{F. Zhu and B. Chen are with the Department of Electrical Engineering and Computer Science, Syracuse University, Syracuse, NY 13244 USA (email: fashu@syr.edu, bichen@syr.edu). The material in this paper was presented in part at the Information Theory and Applications (ITA) Workshop, San Diego, CA, February 2011 and the IEEE International Symposium on Information Theory, Boston, MA, July 2012. This work was supported in part by the National Science Foundation under Award CCF-0905320.}
\begin{abstract}
This paper studies the capacity of a class of discrete memoryless interference channels where interference is defined analogous to that of Gaussian interference channel with one-sided weak interference. The sum-rate capacity of this class of channels is determined. As with the Gaussian case, the sum-rate capacity is achieved by letting the transceiver pair subject to interference communicate at a rate such that its message can be decoded at the unintended receiver using single user detection. It is also established that this class of discrete memoryless interference channels is equivalent in capacity region to certain degraded interference channels. This allows the construction of capacity outer-bounds using the capacity regions of associated degraded broadcast channels. The same technique is then used to determine the sum-rate capacity of discrete memoryless interference channels with mixed interference as defined in the paper. The obtained capacity bounds and sum-rate capacities are used to resolve the capacities of several new discrete memoryless interference channels. 
\end{abstract}

\section{Introduction}\label{sec:intro}
The interference channel (IC) models the situation where the transmitters communicate
with their intended receivers while generating interference to unintended
receivers. Despite decades of intense research, the capacity region of IC remains unknown except for a few
special cases. These include interference channels with strong and very strong interference \cite{Carleial:75IT, Sato:78IT1,Sato:81IT, Han&Kobayashi:81IT, Costa&ElGamal:87IT}; classes of deterministic and semi-deterministic ICs \cite{ElGamal&Costa:82IT, Chong&Motani:09IT}; and classes of discrete degraded ICs \cite{Benzel:79IT, Liu&Ulukus:08IT}.

There exists a strong parallel, both in terms of capacity region and capacity achieving encoding schemes, between two classes of interference channels: the discrete memoryless interference channel (DMIC) and the Gaussian interference channel (GIC). 

A DMIC as described in \ref{sec:DMIC} is characterized by its discrete input and output alphabets as well as the channel transition probability $p(y_1y_2|x_1x_2)$. On the other hand, a GIC, in its standard form, has its outputs expressed as
\begin{eqnarray}
Y_1&=&X_1+aX_2+Z_1,\label{eq:GIC_y1}\\
Y_2&=&bX_1+X_2+Z_2,\label{eq:GIC_y2}
\end{eqnarray}
where $a$ and $b$ are
the channel coefficients corresponding to the interference links,
$X_i$ and $Y_i$ are the transmitted and received signals, and the
channel input sequence $X_{i1}, X_{i2}, \cdots, X_{in}$ is subject
to the power constraint $\displaystyle\sum_{j=1}^{n}\Emat[X_{ij}^2]\leq
nP_i$, $i=1,2$, $Z_1$ and $Z_2$ are Gaussian noises with zero mean
and unit variance, independent of $X_1, X_2$. We describe below parallel capacity results between the two types of interference channels.
\begin{description}
\item[\bf Very Strong Interference]\hfill\\
Carleial \cite{Carleial:75IT} defined the very strong interference for a GIC in standard form as
\begin{eqnarray}
a^2&\geq& 1+P_1,\label{eq:GIC_verystrong1}\\
b^2&\geq& 1+P_2\label{eq:GIC_verystrong2}
\end{eqnarray} 
in Eqs.~(\ref{eq:GIC_y1}) and (\ref{eq:GIC_y2}). In this case, interference can be decoded first and subtracted from the received signals, resulting in interference-free signals for the intended receivers. This sequential decoding scheme under the very strong interference condition achieves the following rate region
\bqa\label{eq:GIC_verystrong_CR}
\Rmat(P_1,P_2) = \left\{(R_1,R_2)\left|\begin{array}{c}
  0\leq R_1\leq\frac{1}{2}\log(1+P_1)\\
  0\leq R_2\leq\frac{1}{2}\log(1+P_2)
\end{array}\right.\right\}.\eqa

This rate region is also a natural outer bound, hence is indeed the capacity region of the GIC under very strong interference, and is achieved with Gaussian input. For {\em Gaussian input}, the condition in (\ref{eq:GIC_verystrong1}) and (\ref{eq:GIC_verystrong2}) implies that 
\begin{eqnarray}
I(X_1;Y_1|X_2)&\leq& I(X_1;Y_2), \label{eq:DMIC_verystrong_1}\\
I(X_2;Y_2|X_1)&\leq& I(X_2;Y_1). \label{eq:DMIC_verystrong_2}
\end{eqnarray}
Sato in \cite{Sato:78IT1} imposes the above condition on a DMIC with the additional requirement that it hold for all product input and obtained the capacity region for a DMIC with very strong interference to be
\[
\label{eq:DMIC_verystrong_CR}
\Rmat = \left\{(R_1,R_2)\left|\begin{array}{c}
  0\leq R_1\leq I(X_1;Y_1|X_2)\\
  0\leq R_2\leq I(X_2;Y_2|X_1)
\end{array}\right.\right\}.
\]

Sato alluded in \cite{Sato:78IT1} that (\ref{eq:DMIC_verystrong_1}) and (\ref{eq:DMIC_verystrong_2}) hold for all product input may be too restrictive, i.e., ``This is a sufficient condition for the coincidence of the bounds, but may not be necessary." In \cite{Xu-etal:Globecom2010}, it was established indeed that for a DMIC, the very strong interference can be relaxed to be such that conditions (\ref{eq:DMIC_verystrong_1}) and (\ref{eq:DMIC_verystrong_2}) need to be satisfied only for input distribution achieving the boundary points of the capacity region. This simple generalization broadens the class of DMIC with very strong interference and is also consistent with the GIC counterpart - it was shown in \cite{Xu-etal:Globecom2010} that (\ref{eq:DMIC_verystrong_1}) and (\ref{eq:DMIC_verystrong_2}) may be violated with non-Gaussian input even if (\ref{eq:GIC_verystrong1}) and (\ref{eq:GIC_verystrong2}) are satisfied. 
\item[\bf Strong Interference] \hfill\\
Han and Kobayashi \cite{Han&Kobayashi:81IT} and Sato \cite{Sato:81IT} independently obtained the capacity region of a GIC under strong interference, i.e., when, $a\geq 1$ and $b\geq 1$ in Eqs.~(\ref{eq:GIC_y1}) and (\ref{eq:GIC_y2}) as the following
\bqa\label{eq:GIC_strong_CR}
\Rmat(P_1,P_2)=\left\{(R_1,R_2)\left|\begin{array}{c}
  0\leq R_1\leq\frac{1}{2}\log(1+P_1)\\
  0\leq R_2\leq\frac{1}{2}\log(1+P_2)\\
  R_1+R_2\leq \min\{\frac{1}{2}\log(1+P_1+aP_2), \frac{1}{2}\log(1+bP_1+P_2)\}
\end{array}\right.\right\}.
\eqa

Clearly, this capacity region coincides with that of a compound multiple-access channel (MAC) where both receivers are expected to decode both messages. Notice that in the case of $a^2\geq 1+P_1$ and $b^2\geq 1+P_2$, the sum rate bound in (\ref{eq:GIC_strong_CR}) is inactive thus (\ref{eq:GIC_strong_CR}) includes (\ref{eq:GIC_verystrong_CR}) as its special case. Nevertheless, to achieve (\ref{eq:GIC_strong_CR}) under the strong interference condition, joint decoding instead of sequential decoding  is required at each receiver.

In \cite{Sato:81IT} Sato also conjectured the condition as well as the capacity region of DMICs under strong interfernce, which was eventually proved by Costa and El Gamal in 1987 \cite{Costa&ElGamal:87IT}. The strong interference for a DMIC is referred to the condition that the inputs $X_1$ and $X_2$ and corresponding outputs $Y_1$ and $Y_2$ satisfy
\bqa
I(X_1;Y_1|X_2)\leq I(X_1;Y_2|X_2),\\
I(X_2;Y_2|X_1)\leq I(X_2;Y_1|X_1),\label{eq:dmic_strong_cond2}
\eqa
for all product probability distribution on $\Xmat_1 \times \Xmat_2$.

The corresponding capacity region was shown to be the union of the rate pairs $(R_1,R_2)$ satisfying
\bqa
\Rmat=\left\{(R_1,R_2)\left|\begin{array}{c}
  0\leq R_1\leq I(X_1;Y_1|X_2Q)\\
  0\leq R_2\leq I(X_2;Y_2|X_1Q)\\
  R_1+R_2\leq \min\{I(X_1X_2;Y_1|Q), I(X_1X_2;Y_2|Q)\}
\end{array}\right.\right\},
\eqa
where $Q$ is a time-sharing parameter of cardinality $4$, and the union is over all probability distributions of the form $p(q)p(x_1|q)p(x_2|q)p(y_1y_2|x_1x_2)$, with $p(y_1y_2|x_1x_2)$ specified by the channel. It was established in \cite{Xu-etal:Globecom2010} that the condition in (\ref{eq:DMIC_verystrong_1}) and (\ref{eq:DMIC_verystrong_2}) are consistent with the strong interference condition for a GIC. That is, for a GIC in stardard form, $a\geq 1$ and $b\geq 1$ is equivalent to (\ref{eq:DMIC_verystrong_1}) and (\ref{eq:DMIC_verystrong_2}) for all product input distribution for a GIC.
\end{description}

While the capacity region for the general GIC remains unknown, there have been recent progresses in characterizing the sum-rate capacity of certain GICs, including: GICs with one-sided weak interference \cite{Sason:04IT}, noisy interference  \cite{Shang-etal:09IT, Motahari&Khandani:09IT, Annapureddy&Veeravalli:09IT}, and mixed interference \cite{Motahari&Khandani:09IT}. This paper attempts to derive parallel sum-rate capacity results for DMICs with weak one-sided and mixed inference which complement existing parallel results in the strong interference regime. Our definitions of one-sided, weak, or mixed interference are motivated by properties associated with the corresponding Gaussian channels. Some of those definitions are intimately related to those introduced in \cite{Chong-etal:07IT} which studies the capacity region of the discrete memoryless Z-channel.

The rest of the paper is organized as follows. Section~\ref{sec: preliminaries} presents the channel model and relevant previous results. Section~\ref{sec:main} defines the DMIC with one-sided weak interference and derives its sum-rate capacity. We refer to those DMICs with one-sided interference as DMZIC (i.e., discrete memoryless Z interference channel) for ease of presentation. The equivalence between the DMIC with weak one-sided interference and the discrete degraded interference channel (DMDIC) is established which allows one to construct a capacity outer-bound for the DMZIC using the capacity region of the associated degraded broadcast channel. Several specific DMICs are studied in Section III whose capacities or capacity bounds are obtained. Section~\ref{sec:mixed} defines DMICs with mixed interference  and derives the sum capapcity for this class of channels. Section~\ref{sec:conclusion} concludes this paper.

\section{Preliminaries}\label{sec: preliminaries}
\subsection{Discrete Memoryless Interference Channels\label{sec:DMIC}}
A discrete interference channel is specified by its input alphabets $\Xmat_1$ and $\Xmat_2$,  output alphabets $\Ymat_1$ and $\Ymat_2$, and the channel transition matrices
\begin{eqnarray}
p(y_1|x_1x_2)&=&\sum_{y_2\in \Ymat_2}p(y_1y_2|x_1x_2),\label{eq:DMIC_condition1}\\
p(y_2|x_1x_2)&=&\sum_{y_1\in \Ymat_1}p(y_1y_2|x_1x_2).\label{eq:DMIC_condition2}
\end{eqnarray}

The discrete IC is said to be \textit{memoryless} if
\begin{equation}
p(y_1^ny_2^n|x_1^nx_2^n)=\prod_{i=1}^n p(y_{1i}y_{2i}|x_{1i}x_{2i}).
\end{equation}

A $(n,2^{nR_1},2^{nR_2},\lambda_1,\lambda_2)$ \textit{code} for a DMIC with independent information consists of two message sets $\Mmat_1 =\{1,2,\cdots,2^{nR_1}\}$ and $\Mmat_2=\{1,2,\cdots,2^{nR_2}\}$ for senders $1$ and $2$ respectively, two encoding functions
\bqn
f_1: \Mmat_1\rightarrow \Xmat_1^n,\hspace{.1in} f_2: \Mmat_2\rightarrow\Xmat_2^n,
\eqn
and two decoding functions
\bqn
\varphi_1: \Ymat_1^n\rightarrow \Mmat_1,\hspace{.1in} \varphi_2: \Ymat_2^n\rightarrow \Mmat_2.
\eqn
The average probabilities of error are defined as
\small
  \bqn \lambda_1\!\!\!\!\!\!&=&\!\!\!\!\!\!\frac{1}{|\Mmat_1||\Mmat_2|}\sum_{w_1=1}^{2^{nR_1}}\sum_{w_2=1}^{2^{nR_2}}Pr\{\varphi_1(\ybf_1)\neq w_1|W_1=w_1,W_2=w_2\},\\
  \lambda_2\!\!\!\!\!\!&=&\!\!\!\!\!\!\frac{1}{|\Mmat_1||\Mmat_2|}\sum_{w_1=1}^{2^{nR_1}}\sum_{w_2=1}^{2^{nR_2}}Pr\{\varphi_2(\ybf_2)\neq w_2|W_1=w_1,W_2=w_2\}.
  \eqn
  \normalsize

A rate pair $(R_1,R_2)$ is said to be \textit{achievable} for a DMIC if there exist a sequence of
$(2^{nR_1}, 2^{nR_2}, n, \lambda_1, \lambda_2)$ codes such that $\lambda_1,\lambda_2\rightarrow 0$ as $n\rightarrow \infty$.
The capacity region of a DMIC is defined as the closure of the set of all
achievable rate pairs.

\subsection{Existing Results for GICs}
Sason \cite{Sason:04IT} proved that the sum-rate capacity for GICs with one-sided weak interference ($a<1$ and $b=0$ in Eqs.~(\ref{eq:GIC_y1}) and (\ref{eq:GIC_y2})) is
\bqn
C_{\mbox{sum}}=\frac{1}{2}\log(1+P_2)+\frac{1}{2}\log\left(1+\frac{P_1}{1+aP_2}\right).
\eqn

This sum-rate capacity is achieved by letting the transceiver pair subject to interference communicate at a rate such that its message can be decoded at the unintended receiver using single user detection, and the interference-free transceiver pair communicate at the maximum rate. The GIC with one-sided interference is often referred to as the Gaussian Z interference channel (GZIC). 

Motahari and Khandani \cite{Motahari&Khandani:09IT} established that the sum-rate capacity for GICs with mixed interference ($a\leq 1$ and $b\geq 1$) is
\bqn
C_{\mbox{sum}}=\min\left\{\frac{1}{2}\log\left(1+\frac{P_1}{1+aP_2}\right),\frac{1}{2}\log\left(1+\frac{bP_1}{1+P_2}\right)\right\} +\frac{1}{2}\log(1+P_2).
\eqn

To achieve this sum-rate capacity, the transceiver pair subject to strong interference communicates at a rate as if there is no interference, while the transceiver pair subject to weak interference communnicates at a rate such that its message can be decoded at both receivers using single user detection. We attempt to extend these results to DMICs with appropriately defined one-sided weak interference and mixed interference. This extension will in turn allow us to solve the capacity of new DMICs.
\subsection{Useful Properties of Markov Chains}
The following properties of Markov chains are useful throughout the paper \cite{book:GraphicalModel}:
\begin{itemize}
  \item Decomposition: $X-Y-ZW\Longrightarrow X-Y-Z$;
  \item Weak Union: $X-Y-ZW\Longrightarrow X-YW-Z$;
  \item Contraction: $(X-Y-Z)$ and $(X-YZ-W)\Longrightarrow X-Y-ZW$.
\end{itemize}

\section{The DMIC with One-sided Weak Interference}\label{sec:main}
\subsection{Channel Model and Sum Rate Capacity}
\begin{definition}
For the DMIC defined in Section~\ref{sec:DMIC}, if for all $x_1$, $x_2$, $y_2$,
\bqa
p(y_2|x_2)=p(y_2|x_1x_2),\label{DMZIC_condition2}
\eqa
or equivalently,
\bqa
X_1- X_2 - Y_2\label{eq:def_Z}
\eqa
forms a Markov chain, this DMIC is said to have one-sided interference.
\end{definition}

Clearly, the Markov chain condition (\ref{eq:def_Z}) holds for the GIC with $b=0$ in (\ref{eq:GIC_y2}). As with the Gaussian case, we refer to the DMIC with one-sided interference as simply discrete memoryless Z interference channel (DMZIC). 
From the definition, it follows that $X_1$ and $Y_2$ are independent for all input distribution $p(x_1)p(x_2)$.

To define DMZIC with weak interference, we first revisit some properties of Gaussian ZIC with weak interference.
It is straightforward to show that a Gaussian ZIC with weak interference is equivalent in its capacity region to a degraded Gaussian ZIC
satisfying the Markov chain
\bqa\label{eq:weakcond2}
X_2- (X_1,Y_2)- Y_1.
\eqa
The proof is similar to that in \cite{Chong-etal:07IT} for a Gaussian Z channel instead of a Gaussian Z interference channel. This motivates us to define DMZIC with weak interference as follows.
\begin{definition}
A DMZIC is said to have \textit{weak interference} if the channel transition probability factorizes as
\bqa
p(y_1y_2|x_1x_2)=p(y_2|x_2)p'(y_1|x_1y_2),
\eqa
for some $p'(y_1|x_1y_2)$, or, equivalently, the channel is stochastically degraded.
\end{definition}

In the absence of receiver cooperation, a stochastically degraded interference channel is equivalent in its capacity to a physically degraded interference channel. As such, we will assume in the following that the channel is physically degraded, i.e., the DMZIC with weak interference admits the Markov chain $X_2-(X_1,Y_2)-Y_1$. 

The channel transition probability $p(y_1y_2|x_1x_2)$ for this class of channels factorizes as
\bqa\label{eq:DMZIC_channel_prob}
p(y_1y_2|x_1x_2)&=&p(y_2|x_1x_2)p(y_1|x_1x_2y_2)\nonumber\\
&=&p(y_2|x_2)p(y_1|x_1y_2).
\eqa
As a consequence, the following inequality holds
\begin{equation}
I(U;Y_2)\geq I(U;Y_1|X_1),\label{ieq:weak}
\end{equation}
for all input distributions $p(x_1)p(u)p(x_2|u)$. We note that this condition is indeed what is needed in establishing the sum-rate capacity of this channel and was used in \cite{Farsani&Marvasti:2012IT} to define the weak interference for DMZIC. The definition used in this paper, while stronger than necessary, is much more intuitive and easier to verify.

The above definition of weak interference leads to the following sum-rate capacity result.
\begin{theorem}\label{thm:DMZIC_CR}
The sum-rate capacity of a DMZIC with weak interference as defined above is
\bqa\label{eq:DMZIC_SC}
C_{\mbox{sum}}=\max_{p(x_1)p(x_2)}\{I(X_1;Y_1)+I(X_2;Y_2)\}.
\eqa
\end{theorem}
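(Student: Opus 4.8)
The plan is to prove the two directions separately: achievability by combining interference-free transmission for the clean pair with treating-interference-as-noise for the interfered pair, and a converse that single-letterizes the two decoding constraints and then uses the degraded structure to force the optimizing input to be a product distribution.

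For achievability, fix any product input $p(x_1)p(x_2)$ and generate independent i.i.d.\ codebooks $X_2^n(w_2)\sim\prod p(x_2)$ and $X_1^n(w_1)\sim\prod p(x_1)$. Receiver~2 sees the clean channel $p(y_2|x_2)$ and decodes $W_2$ reliably whenever $R_2<I(X_2;Y_2)$. Receiver~1 performs single-user (treat-interference-as-noise) joint-typicality decoding against the marginal channel $p(y_1|x_1)=\sum_{x_2}p(x_2)p(y_1|x_1x_2)$; since the interfering codeword is i.i.d.\ $p(x_2)$ and independent of $X_1^n$, averaging over the interferer's codebook makes the effective per-letter channel seen by the true message exactly $p(y_1|x_1)$, so $R_1<I(X_1;Y_1)$ is achievable and a good codebook pair exists. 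Maximizing $I(X_1;Y_1)+I(X_2;Y_2)$ over product inputs gives the inner bound.

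For the converse, Fano's inequality gives $nR_1\le I(X_1^n;Y_1^n)+n\epsilon_n$ and $nR_2\le I(X_2^n;Y_2^n)+n\epsilon_n$, so it suffices to show $I(X_1^n;Y_1^n)+I(X_2^n;Y_2^n)\le\sum_i[I(X_{1i};Y_{1i})+I(X_{2i};Y_{2i})]$ with the per-letter inputs independent (they are, since $X_1^n=f_1(W_1)$ and $X_2^n=f_2(W_2)$ are functions of independent messages). Writing both mutual informations in entropy form and using that channel~2 is clean and memoryless, so $H(Y_2^n|X_2^n)=\sum_i H(Y_{2i}|X_{2i})$, together with subadditivity $H(Y_1^n)\le\sum_i H(Y_{1i})$, the entire bound reduces to the single inequality
\[ \sum_i H(Y_{1i}|X_{1i}) - H(Y_1^n|X_1^n) \;\le\; \sum_i H(Y_{2i}) - H(Y_2^n); \]
substituting it back collapses the two terms into $\sum_i[I(X_{1i};Y_{1i})+I(X_{2i};Y_{2i})]$, and since each summand is evaluated at an independent pair $(X_{1i},X_{2i})$ it is at most $\max_{p(x_1)p(x_2)}\{I(X_1;Y_1)+I(X_2;Y_2)\}$, so the sum is at most $n$ times this maximum and, dividing by $n$, $R_1+R_2\le C_{\mbox{sum}}$.

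The \emph{main obstacle}, and the only place the weak-interference (degradedness) hypothesis enters, is controlling the temporal correlation that the interference injects into $Y_1^n$ given $X_1^n$: the naive single-letterization of $I(X_1^n;Y_1^n)$ fails because $Y_1^n$ is not conditionally product given $X_1^n$, and the displayed inequality is precisely what repairs this. I would prove it by a data-processing argument for multi-information. Condition on $X_1^n=x_1^n$; by the physical degradedness $X_2^n-(X_1^n,Y_2^n)-Y_1^n$ the sequence $Y_1^n$ is the output of the memoryless channel $\prod_i p(y_{1i}|x_{1i},y_{2i})$ applied to $Y_2^n$, whose law is $p(y_2^n)$ because $Y_2^n$ is independent of $X_1^n$. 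Applying this product channel to $p(y_2^n)$ and to $\prod_i p(y_{2i})$ and invoking the non-increase of KL divergence under a channel yields $\sum_i H(Y_{1i}|x_{1i})-H(Y_1^n|x_1^n)\le\sum_i H(Y_{2i})-H(Y_2^n)$ for every $x_1^n$; averaging over $X_1^n$ gives the displayed inequality (here $H(Y_{1i}|X_1^n)=H(Y_{1i}|X_{1i})$ since $Y_{1i}$ depends on $X_1^n$ only through $X_{1i}$). Equivalently the converse could be run through inequality~(\ref{ieq:weak}) or via the associated degraded broadcast channel as the paper suggests, but the multi-information route is self-contained and pinpoints exactly where degradedness is used.
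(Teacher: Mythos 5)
Your proposal is correct, and your converse takes a genuinely different route from the paper's. The paper single-letterizes $I(X_1^n;Y_1^n)+I(X_2^n;Y_2^n)$ with the auxiliary variable $U_i=Y_2^{i-1}$: after a sequence of Markov-chain manipulations (weak union, contraction, decomposition) needed to justify $H(Y_{1i}\mid X_1^nY_2^{i-1})=H(Y_{1i}\mid X_{1i}Y_2^{i-1})$, it invokes the single-letter weak-interference inequality (\ref{ieq:weak}), $I(U;Y_1\mid X_1)\le I(U;Y_2)$ for inputs of the form $p(x_1)p(u)p(x_2|u)$, to convert $I(U_i;Y_{1i}\mid X_{1i})$ into $I(U_i;Y_{2i})$, which is then absorbed via $I(U_iX_{2i};Y_{2i})=I(X_{2i};Y_{2i})$, and a time-sharing variable $Q$ finishes. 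You instead compress the whole argument into a single comparison of total correlations: $\sum_i H(Y_{1i}|x_{1i})-H(Y_1^n|x_1^n)\le\sum_i H(Y_{2i})-H(Y_2^n)$, proved by pushing both $P_{Y_2^n}$ and $\prod_i P_{Y_{2i}}$ through the common product channel $\prod_i p(y_{1i}\mid x_{1i},\cdot)$ and applying the data-processing inequality for KL divergence. This is valid precisely because degradedness makes $Y_1^n$ (given $x_1^n$) a coordinatewise stochastic function of $Y_2^n$, and one-sidedness makes the law of $Y_2^n$ independent of $x_1^n$, so both divergences are genuinely taken against the products of the true marginals; your remaining steps ($H(Y_2^n|X_2^n)=\sum_i H(Y_{2i}|X_{2i})$, subadditivity of $H(Y_1^n)$, independence of $(X_{1i},X_{2i})$, and bounding each per-letter summand by the maximum, which also removes the need for $Q$) all check out. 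Your route is more self-contained -- no auxiliary random variables and no Markov-chain lemma bookkeeping -- and it pinpoints exactly where the hypothesis is used. What the paper's heavier machinery buys is the intermediate genie-type bound $R_1\le I(UX_1;Y_1|Q)$ with $p(ux_1x_2)=p(x_1)p(ux_2)$, which falls out of its derivation and is reused later (e.g., in Example 3) to get full capacity regions; your argument delivers only the sum-rate bound, which is all the theorem asks for.
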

\begin{proof}
This sum-rate is achieved by two receivers decoding their own messages while treating any interference, if present, as noise.

For the converse, we have
\small
\bqn
n(R_1+R_2)-n \epsilon &\stackrel{(a)}{\leq}&\!\!\!\!\!\!I(X_1^n;Y_1^n)+I(X_2^n;Y_2^n)\\
&\stackrel{(b)}{=}&\!\!\!\!\!\!\sum_{i=1}^n \left(H(Y_{1i}|Y_1^{i-1})-H(Y_{1i}|Y_1^{i-1}X_1^n)+H(Y_{2i}|Y_2^{i-1})-H(Y_{2i}|Y_2^{i-1}X_2^n)\right)\\
&\stackrel{(c)}{\leq}&\!\!\!\!\!\!\sum_{i=1}^n \left(H(Y_{1i})-H(Y_{1i}|Y_1^{i-1}X_1^nY_2^{i-1})+H(Y_{2i}|Y_2^{i-1})-H(Y_{2i}|Y_2^{i-1}X_{2i})\right)\\
&\stackrel{(d)}{=}&\!\!\!\!\!\!\sum_{i=1}^n\left(H(Y_{1i})-H(Y_{1i}|X_1^nY_2^{i-1})+I(X_{2i};Y_{2i}|U_i)\right)\\
&\stackrel{(e)}{=}&\!\!\!\!\!\!\sum_{i=1}^n\left(H(Y_{1i})-H(Y_{1i}|X_{1i}Y_2^{i-1})+I(X_{2i};Y_{2i}|U_i)\right)\\
&=&\!\!\!\!\!\!\sum_{i=1}^n (I(U_iX_{1i};Y_{1i})+I(X_{2i};Y_{2i}|U_i))\\
&=&\!\!\!\!\!\!\sum_{i=1}^n (I(X_{1i};Y_{1i})+I(U_i;Y_{1i}|X_{1i})+I(X_{2i};Y_{2i}|U_i))\\
&\stackrel{(f)}{\leq}&\!\!\!\!\!\!\sum_{i=1}^n (I(X_{1i};Y_{1i})+I(U_i;Y_{2i})+I(X_{2i};Y_{2i}|U_i))\\
&=&\!\!\!\!\!\!\sum_{i=1}^n (I(X_{1i};Y_{1i})+I(U_iX_{2i};Y_{2i}))\\
&\stackrel{(g)}{=}&\!\!\!\!\!\!\sum_{i=1}^n (I(X_{1i};Y_{1i})+I(X_{2i};Y_{2i})),
\eqn
\normalsize
where $U_i\triangleq Y_2^{i-1}$ for all $i$, $(a)$ follows the Fano's Inequality, $(b)$ is from the chain rule and the definition of mutual information, $(c)$ is because of the fact that conditioning reduces entropy, and that $Y_{2i}$ is independent of any other random variables given $X_{2i}$, $(d)$ is due to the memoryless property of the channel and the fact that $Y_{1i}$ is independent of any other random variables given $X_{1i}$ and $Y_{2i}$, then $(X_{1,i}^{n},Y_{1i})-(X_1^{i-1},Y_2^{i-1})- Y_1^{i-1}$ forms a Markov chain. By the weak union property, the Markov chain $Y_{1i}-(X_1^{n},Y_2^{i-1})-Y_1^{i-1}$ holds; $(e)$ is because of the Markov chain $(X_{1}^{i-1},X_{1,i+1}^n)- (X_{1i},Y_2^{i-1})- Y_{1i}$. This can be established using the \textit{independence graph} \cite{Kramer:03IT}. Alternatively, we first note that the Markov chain
\bqn
(X_{1}^{i-1},X_{1,i+1}^n,Y_2^{i-1})- (X_{1i},Y_{2i})- Y_{1i}
\eqn
holds, since given $X_{1i}$ and $Y_{2i}$, $Y_{1i}$ is independent of $X_{1}^{i-1},X_{1,i+1}^n,Y_2^{i-1}$.
By the weak union property, the following Markov chain is obtained:
\bqn
(X_{1}^{i-1},X_{1,i+1}^n)- (X_{1i},Y_2^i)- Y_{1i}.
\eqn
The independence between $Y_2^n$ and $X_1^n$ gives the Markov chain
\bqn
(X_{1}^{i-1},X_{1,i+1}^n)- X_{1i}-Y_2^i.
\eqn
The above two Markov chains lead to the following Markov chain:
\bqn
(X_{1}^{i-1},X_{1,i+1}^n)- X_{1i}-(Y_{1i},Y_2^i)
\eqn
by the contraction property. Again, using the weak union property and then the decomposition property, we obtain the Markov chain
\bqn
(X_{1}^{i-1},X_{1,i+1}^n)- (X_{1i},Y_2^{i-1})- Y_{1i}
\eqn
as desired. Since $U_i$ and $X_{1i}$ are independent, then $p(x_1x_2u)=p(x_1)p(u,x_2)$, thus $(f)$ comes from (\ref{ieq:weak}). Finally, $(g)$ follows from the Markov chain $U_i- X_{2i}- Y_{2i}$. Finally, by introducing a time-sharing random variable $Q$, one obtains
\bqn
R_1+R_2&\leq& I(X_1;Y_1|Q)+I(X_2;Y_2|Q)+\epsilon\\
&\leq& \max_{p(x_1)p(x_2)}\{I(X_1;Y_1)+I(X_2;Y_2)\}+\epsilon.
\eqn
\end{proof}

\textit{Remark $1$:} From the strong interference condition (\ref{eq:dmic_strong_cond2}), it is perhaps tempted to define the condition for weak interference as 
\bqa
I(X_2;Y_1|X_1)\leq I(X_2;Y_2),\label{eq:DMZIC_weakextention}
\eqa
for all product input distribution on $\Xmat_1\times\Xmat_2$. Notice that the right-hand side is same as $I(X_2;Y_2|X_1)$ given that this is one-sided interference channel. 
The Markov chain (\ref{eq:weakcond2}) is a sufficient, but not necessary, condition for the mutual information condition (\ref{eq:DMZIC_weakextention}).
An example is provided in Appendix z\ref{apx:example} such that the mutual information condition holds but the Markov chain is not valid. This is different from that of the Gaussian case; it can be shown that the coefficient $a\leq 1$ in a Gaussian ZIC is a sufficient and necessary condition for (\ref{eq:DMZIC_weakextention}) to hold. It is yet unknown if condition (\ref{eq:DMZIC_weakextention}) is sufficient for the sum-rate capacity result (\ref{eq:DMZIC_SC}) to hold for the DMZIC.

\textit{Remark $2$:}  For a DMZIC with weak interference, an achievable rate region, $\Cmat$, is given
by the set of all nonnegative rate pairs $(R_1,R_2)$ that satisfy
\bqa
R_1&\leq&I(X_1;Y_1|U_2Q),\label{eq:achievable_start}\\
R_2&\leq&I(X_2;Y_2|Q),\label{eq:achievable_mid}\\
R_1+R_2&\leq&I(U_2X_1;Y_1|Q)+I(X_2;Y_2|U_2Q).\label{eq:achievable_end}
\eqa
where the input distribution factorizes as:\bqa
p(qu_2x_1x_2)=p(q)p(x_1|q)p(u_2|q)p(x_2|u_2,q).\label{eq:input_dist}
\eqa
Furthermore, the region remains invariant if we impose the constraints $\|\Qmat\|\leq 5$, $\|\Umat_2\|\leq \|\Xmat_2\|+3$.
This can be readily obtained from the achievable rate region of the general two-user IC \cite{Han&Kobayashi:81IT, Chong-etal:08IT}.
In the next lemma, we provide a simpler description for the above achievable rate region.
\begin{lemma}
The region $\Cmat$ is equivalent to the set of all rate pairs $(R_1,R_2)$ satisfying
\bqa
R_1&\leq& I(X_1;Y_1|U_2'Q),\label{eq:achievable_alter_start}\\
R_2&\leq& I(U_2';Y_1Q)+I(X_2;Y_2|U_2'Q)\label{eq:achievable_alter_end}.
\eqa
where the input distribution factorizes as (\ref{eq:input_dist}). Furthermore, the region remains invariant if we impose the constraints $\|\Qmat\|\leq 4$, $\|\Umat_2'\|\leq \|\Xmat_2\|+3$.
\end{lemma}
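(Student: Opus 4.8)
The plan is to prove the set equality by a two-sided inclusion carried out directly on the mutual-information quantities, rather than through a full Fourier--Motzkin elimination. Throughout I abbreviate, for a fixed input distribution of the form (\ref{eq:input_dist}), $A=I(X_1;Y_1|U_2Q)$, $B=I(X_2;Y_2|Q)$, and $C=I(U_2X_1;Y_1|Q)+I(X_2;Y_2|U_2Q)$, so that $\Cmat$ at that distribution is the (possibly degenerate) pentagon $\{R_1\le A,\ R_2\le B,\ R_1+R_2\le C\}$. The single most useful identity is the chain rule $I(U_2X_1;Y_1|Q)=I(U_2;Y_1|Q)+I(X_1;Y_1|U_2Q)$, which yields $C-A=I(U_2;Y_1|Q)+I(X_2;Y_2|U_2Q)$, i.e. exactly the right-hand side of the simplified bound (\ref{eq:achievable_alter_end}).

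For the inclusion of the simplified region into $\Cmat$ I would fix a distribution and set $U_2=U_2'$. The $R_1$ constraint is literally the same, and the sum constraint is immediate from the chain-rule identity: adding (\ref{eq:achievable_alter_start}) and (\ref{eq:achievable_alter_end}) gives $R_1+R_2\le I(X_1;Y_1|U_2Q)+I(U_2;Y_1|Q)+I(X_2;Y_2|U_2Q)=C$. The only constraint needing work is $R_2\le B$. Using the Markov chain $U_2-X_2-Y_2$ to write $I(X_2;Y_2|U_2Q)=B-I(U_2;Y_2|Q)$, the claim reduces to $I(U_2;Y_1|Q)\le I(U_2;Y_2|Q)$, and this is where the weak-interference hypothesis enters: since $X_1$ and $U_2$ are conditionally independent given $Q$, one has $I(U_2;Y_1|Q)\le I(U_2;Y_1|X_1Q)$, and the latter is $\le I(U_2;Y_2|Q)$ by (\ref{ieq:weak}) applied conditionally on $Q$. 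This chain of inequalities is the technical heart of this direction.

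The reverse inclusion is where I expect the real obstacle, because one cannot keep $U_2'=U_2$ for every point of $\Cmat$: the simplified bound $C-A$ is strictly below the $\Cmat$ bound $B$ whenever $I(U_2;Y_1|Q)<I(U_2;Y_2|Q)$. The resolution is to cover the two dominant corners of the pentagon by two different auxiliaries and then invoke convexity. The corner $(A,\,C-A)$ is precisely the dominant corner of the simplified region with $U_2'=U_2$. For the corner $(C-B,\,B)$ I would instead take $U_2'$ constant, collapsing (\ref{eq:achievable_alter_start})--(\ref{eq:achievable_alter_end}) to the rectangle $R_1\le I(X_1;Y_1|Q)$, $R_2\le B$; a second use of the weak-interference inequality, through $C-B=I(X_1;Y_1|Q)+I(U_2;Y_1|X_1Q)-I(U_2;Y_2|Q)\le I(X_1;Y_1|Q)$, places this corner inside that rectangle. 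Since both regions are convex once the time-sharing variable $Q$ is retained, time-sharing between the two distributions covers the entire dominant face joining the two corners, and the remaining dominated corners $(A,0)$, $(0,B)$, $(0,0)$ are trivially contained; hence every point of $\Cmat$ lies in the union of simplified regions.

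Finally, the cardinality bounds $\|\Qmat\|\le 4$ and $\|\Umat_2'\|\le\|\Xmat_2\|+3$ would follow from a standard support-lemma (Carath\'eodory) argument applied to the two-inequality description, the reduction of the time-sharing cardinality from $5$ to $4$ reflecting that one constraint has been eliminated so that one fewer functional need be preserved; I would not carry out this bookkeeping in detail, as it is routine. The subtlety worth flagging is that \emph{both} inclusions hinge on the same weak-interference inequality (\ref{ieq:weak}) --- it is exactly what makes the three-constraint Han--Kobayashi region collapse to a two-constraint region in this regime --- and that the reverse inclusion genuinely requires two distinct auxiliary choices glued together by convexity rather than a single input distribution.
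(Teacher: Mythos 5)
Your proof is correct and takes essentially the same route as the paper's: the forward inclusion by setting $U_2=U_2'$, and the reverse inclusion by showing that the two dominant corners of the pentagon lie in the simplified region (one via $U_2'=U_2$, the other via a constant $U_2'$) and appealing to convexity through $Q$. If anything, you spell out two steps the paper leaves implicit --- namely that the inclusion of the simplified region into $\Cmat$ itself rests on the weak-interference inequality (\ref{ieq:weak}), and that the corner $\left(I(U_2X_1;Y_1|Q)-I(U_2;Y_2|Q),\ I(X_2;Y_2|Q)\right)$ is covered by the \emph{constant} choice of $U_2'$ rather than by $U_2'=U_2$.
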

\begin{proof}
Let $E$ denote the set defined in the above lemma. The fact that $E\subseteq \Cmat$ follows simply by setting $U_2=U_2'$ and noticing that (\ref{eq:achievable_start})-(\ref{eq:achievable_end}) imply (\ref{eq:achievable_alter_start}) and (\ref{eq:achievable_alter_end}). To prove that $\Cmat\subset E$, we first note that for a given $p(qu_2x_1x_2)$, $\Cmat$ is a pentagon with two extreme points in the first quadrant given by
\bqa
p_1&=&\left(I(X_1;Y_1|U_2,Q=q), I(U_2;Y_1|Q=q)+I(X_2;Y_2|U_2,Q=q)\right),\\
p_2&=&\left(I(U_2X_1;Y_1|Q=q)-I(U_2;Y_2|Q=q), I(X_2;Y_2|Q=q)\right).
\eqa
It suffices to show that, for any given $p(qu_2x_1x_2)$ in (\ref{eq:input_dist}), the corresponding $p_1$ and $p_2$, belongs to the set $E$, where

That $p_1\in E$ follows from setting $U_2=U_2'$. To show that $p_2\in E$, we use the following inequality
\bqn
I(U_2X_1;Y_1|Q=q)-I(U_2;Y_2|Q=q)&=&I(U_2;Y_1|X_1Q=q)-I(U_2;Y_2|Q=q)+I(X_1;Y_1|Q=q)\\&\leq& I(X_1;Y_1|Q=q)\\&\leq& I(X_1;Y_1|U_2,Q=q).
\eqn
Hence, $\Cmat\subseteq E$.
\end{proof}

\subsection{Capacity Outer Bound for DMZIC with Weak Interference}\label{sec:equivalence}
Costa proved in \cite{Costa:85IT} that a GZIC with weak interference is equivalent in capacity region to a degraded GIC. As such, Sato's outer-bound on degraded GIC \cite{Sato:78IT2} applies to that of the GZIC with weak interference. Sato's outer-bound is in essence the capacity region of a related Gaussian broadcast channel, which is a natural outer-bound to the interference channel due to its implied transmitter cooperation. In this section, we use the same technique to obtain a capacity outer-bound for DMZIC with weak interference, i.e., that satisfies the Markov chain $X_2-(X_1,Y_2)-Y_1$. Specifically, for any such DMZIC with weak interference, one can find an equivalent (in capacity region) DMDIC whose capacity region is bounded by that of an associated degraded broadcast channel.

\begin{theorem}\label{thm:outer-bound}
For a DMZIC that satisfies the Markov chain $X_2-X_1Y_2-Y_1$, the capacity region is outer-bounded by
\small
\bqn
\Rmat_{OB}=\overline{co}\left\{\bigcup_{p(u)p(x_1x_2|u)}(R_1,R_2)\left|\begin{array}{l}R_1\leq I(U;Y_1),\\ R_2\leq I(X_1X_2;Y_2'|U)\end{array}\right.\right\},
\eqn
\normalsize
where $U-X_1X_2-Y_2'-Y_1$ forms a Markov chain and $\|\Umat\|=\min\{\|\Ymat_1\|,\|\Ymat_2'\|,\|\Xmat_1\|\cdot\|\Xmat_2\|\}$, and $\overline{co}\left\{\cdot\right\}$ denotes the closure of the convex hull operation.
\end{theorem}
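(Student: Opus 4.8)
The plan is to obtain this outer bound through three successive relaxations, each of which can only enlarge the achievable region, so that the resulting region necessarily contains the capacity region of the DMZIC. First I would relax the interference channel to a broadcast channel by allowing the two transmitters to cooperate; second, I would enhance receiver $2$ by granting it $X_1$ as genie side information, which renders the induced broadcast channel physically degraded; and third, I would invoke the known capacity region of a degraded broadcast channel.

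Concretely, define the enhanced output $Y_2'\triangleq(X_1,Y_2)$ and regard the pair of transmitted symbols as a single input $X\triangleq(X_1,X_2)$ feeding a broadcast channel $p(y_1,y_2'|x_1,x_2)$ in which receiver $1$ observes $Y_1$ and receiver $2$ observes $Y_2'$. This is a legitimate memoryless broadcast channel, since the map to $Y_2'$ is just $p(y_2|x_2)$ composed with a deterministic copy of $X_1$. The weak-interference Markov chain $X_2-(X_1,Y_2)-Y_1$ gives $p(y_1|x_1,x_2,y_2)=p(y_1|x_1,y_2)=p(y_1|y_2')$, so that $(X_1,X_2)-Y_2'-Y_1$ and hence, adjoining any auxiliary variable, $U-(X_1,X_2)-Y_2'-Y_1$. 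Thus the constructed broadcast channel is physically degraded, with $Y_2'$ the stronger output and $Y_1$ the degraded one.

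It remains to assemble the three relaxations. Any rate pair achievable on the DMZIC is achievable on this broadcast channel: an interference-channel code consists of two separate encoders $f_1,f_2$ and two decoders, and these form a valid broadcast-channel code for the combined input $X=(X_1,X_2)$ with the same decoders and identical error probabilities, so dropping the constraint that $X_1^n$ and $X_2^n$ be generated from disjoint messages only enlarges the region. Replacing $Y_2$ by $Y_2'=(X_1,Y_2)$ at receiver $2$ makes that receiver more capable---it may simply discard $X_1$---so the outer-bound property is preserved. Finally, because the broadcast channel is physically degraded, its private-message capacity region is the standard superposition-coding region, with the cloud center $U$ decoded by the weaker receiver $Y_1$: the rates obey $R_1\leq I(U;Y_1)$ and $R_2\leq I(X_1X_2;Y_2'|U)$, the union running over $p(u)p(x_1x_2|u)$ subject to $U-X_1X_2-Y_2'-Y_1$. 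The cardinality bound $\|\Umat\|=\min\{\|\Ymat_1\|,\|\Ymat_2'\|,\|\Xmat_1\|\cdot\|\Xmat_2\|\}$ follows from the usual Fenchel--Eggleston--Carath\'eodory argument for the degraded broadcast channel, and the closure of the convex hull is harmless because the degraded region is already convex and closed.

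The step I expect to demand the most care is the enhancement. One must verify that supplying $X_1$ to receiver $2$ is precisely what converts the weak-interference Markov condition into physical degradedness of the broadcast channel, with the correct ordering of the two receivers, and that the message destined for the degraded receiver $Y_1$ is $W_1$ (rate $R_1$) while that for the stronger receiver $Y_2'$ is $W_2$ (rate $R_2$). This asymmetry is exactly what produces the asymmetric roles $R_1\leq I(U;Y_1)$ and $R_2\leq I(X_1X_2;Y_2'|U)$ in the stated region, and reversing the two receivers' roles would yield an incorrect bound.
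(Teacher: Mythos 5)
Your proof is correct and follows essentially the same route as the paper: you take $Y_2'$ to be (a bijection of) the pair $(X_1,Y_2)$ --- exactly the ``trivial choice'' the paper singles out in Remark 3 --- observe that the weak-interference Markov chain $X_2-(X_1,Y_2)-Y_1$ yields $(X_1,X_2)-Y_2'-Y_1$, and then invoke the capacity region of the associated degraded broadcast channel with $Y_1$ as the weaker receiver decoding $W_1$. The only structural difference is that the paper routes the argument through a capacity \emph{equivalence} between the DMZIC and the constructed degraded interference channel (delegating that step to Costa's Gaussian argument), whereas you prove only the one inclusion the outer bound actually needs --- enhancing receiver 2 with $X_1$ and allowing transmitter cooperation can only enlarge the region --- which makes your write-up self-contained and sidesteps the harder, unneeded direction of Costa's equivalence.
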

\begin{proof}
Suppose that the DMZIC with weak interference has inputs $X_1$, $X_2$ and outputs $Y_1$, $Y_2$ respectively. Let us denote by  $X_1'$, $X_2'$ and $Y_1'$, $Y_2'$ the inputs and outputs of another DMIC with $X_1'=X_1$, $X_2'=X_2$, $Y_1'=Y_1$, and $Y_2'$ to be a function of $X_1$ and $Y_2$, denoted as $Y_2'=f(X_1,Y_2)$ such that the Markov chain $(X_1',X_2')-Y_2'-Y_1'$ holds. Thus, the DMIC specified by the input pair $(X_1',X_2')$, and the output pair $(Y_1',Y_2')$ is indeed a DMDIC.
	
The proof that this DMDIC has the same capacity region as the specified DMZIC, and hence is outer-bounded by the associated broadcast channel  follows in exactly the same fashion as Costa's proof for the Gaussian case \cite{Costa:85IT}, hence is omitted here.
\end{proof}

\textit{Remark $3$:} A trivial choice of $Y_2'$ is a bijection of $X_1$ and $Y_2$. It is easy to verify that the Markov chain $(X_1',X_2')-Y_2'-Y_1'$ holds for such $Y_2'$. However, other $Y_2'$ can be constructed, as long as the Markov chain $(X_1',X_2')-Y_2'-Y_1'$ is satisfied. Nevertheless, the associated broadcast channels would have the same the capacity region.

\subsection{Examples}
\begin{example}Consider a DMZIC with input and output alphabets $\Xmat_1 = \Xmat_2 = \Ymat_1 = \Ymat_2 = \{0,1\}$ and is defined by the equations: $y_1 = x_1\cdot x_2$, $y_2 = x_2$. Etkin and Ordentlich in \cite{Etkin&Ordentlich:2011ISIT} established the capacity region for this binary multiplier channel via a new outer-bounds derived in their paper. As this channel satisfies the weak interference condition in this paper, we can immediately get the sum-rate capacity to be $\displaystyle\max_{p(x_1)p(x_2)}I(X_1;Y_1)+I(X_2;Y_2)$.
\end{example}

\begin{example}\label{ex:2}
Let $\Xmat_1=\Xmat_2=\Ymat_1=\Ymat_2=\{0,1\}$ and
\bqn
Y_1&=&X_1\oplus Y_2,\\
Y_2&=&X_2\oplus Z,
\eqn
where $\oplus$ denotes the modulo $2$ sum and $Z\sim \mbox{Bern}(\epsilon)$.

Clearly, the Markov chain $X_2-X_1Y_2-Y_1$ is satisfied. Let $p=\Pr(X_2=1)$.
Then,
\bqn
I(X_2;Y_2)&=&h_2\left(e(1-p)+(1-\epsilon)p\right)-h_2(\epsilon),\\
I(X_1;Y_1)&=&H(Y_1)-h_2\left(\epsilon(1-p)+(1-\epsilon)p\right).
\eqn
The sum-rate capacity is
\bqn
C_{\mbox{sum}}=\max_{p(x_1)p(x_2)} \{I(X_1;Y_1)+I(X_2;Y_2)\} = 1-h_2(\epsilon),
\eqn
which is achieved by any $p(x_1)p(x_2)$ such that $H(Y_1)=1$. Additionally, both points $(0,1-h_2(\epsilon))$ and $(1-h_2(\epsilon),0)$ are trivially achievable. Therefore, the capacity region of this channel is the triangle connecting the two rate pairs $(0,1-h_2(\epsilon)$ and $(1-h_2(\epsilon),0)$.

This channel does not belong to any class of channels that have been studied in the literature. The property of $H(Y_1|X_1)=H(Y_2)$ is similar to the deterministic interference channel definition \cite{ElGamal&Costa:82IT}. However, $Y_2$ is not a deterministic function of $X_2$.

This channel is equivalent, in the capacity region, to the following interference channel:
\bqn
Y_1&=&X_1\oplus X_2 \oplus Z,\\
Y_2&=&X_1\oplus X_2 \oplus Z.
\eqn
This can be proved in a similar way to that used in \cite{Costa:85IT} for proving the equivalence between the Gaussian ZIC and the Gaussian degraded IC.  Notice that the capacity region of the discrete additive degraded IC is solved by Benzel in \cite{Benzel:79IT}, the capacity region of the DMZIC can be obtained through the equivalent discrete additive degraded IC, i.e., the closure of the convex hull of all the nonnegative $(R_1,R_2)$ satisfying the following inequalities:
\bqn
R_1\leq I(X_1;Y_1),\\
R_2\leq I(X_2;Y_2),
\eqn
for all possible product input distribution on $\Xmat_1\times \Xmat_2$.
\end{example}

\begin{example}
Let $\Xmat_1=\Xmat_2=\Ymat_1=\Ymat_2=\{0,1\}$ and
\bqn
Y_1&=&X_1\cdot Y_2,\\
Y_2&=&X_2\oplus Z.
\eqn
This channel is similar to Example \ref{ex:2} except that $Y_1$ is replaced by an erasure channel.

The Markov chain $X_2-X_1Y_2-Y_1$ holds and the capacity region of this channel can be obtained in a manner similar to that of \cite{Etkin&Ordentlich:2011ISIT}.
We first upper-bound the two individual rates $R_1$ and $R_2$. From the proof of Theorem \ref{thm:DMZIC_CR}, it is straightforward to obtain
\bqn
R_1-\epsilon_1\leq I(UX_1;Y_1|Q)
\eqn
where $U$ is an auxiliary random variable satisfying $p(ux_1x_2) = p(x_1)p(ux_2)$. For $R_2$,
\bqn
n(R_2-\epsilon_2)&\leq& I(X_2^n;Y_2^n)\\
&\leq&\sum_{i=1}^n\left(H(Y_{2i}|Y_2^{i-1})-H(Y_{2i}|X_2^nY_2^{i-1})\right)\\
&\leq&\sum_{i=1}^n\left(H(Y_{2i})-H(Y_{2i}|X_{2i})\right)\\
&=&\sum_{i=1}^nI(X_{2i};Y_{2i})\\
&=&nI(X_{2};Y_{2}|Q).
\eqn
Let $p_{1,q}=\Pr(x_1=1|Q=q)$, $p_{2,q}=\Pr(x_2=1|Q=q)$, $p_{2,q}^y=\Pr(y_2=1|Q=q)$, $r_q=H(Y_2|U,q)$, note that
\bqn
p_{2,q}^y=p_{2,q}(1-\epsilon)+(1-p_{2,q})\epsilon,
\eqn
and
\bqn
r_q\leq h_2(p_{2,q}),
\eqn
for each $q$. Then,
\bqn
R_1-\epsilon_1&\leq& I(UX_1;Y_1|Q)\\
&=&\sum_{q=1}^{\|\Qmat\|}[H(Y_1|q)-\sum_{x_1=0}^1p(x_1|q)H(Y_1|x_1,U,q)]\\
&=&\sum_{q=1}^{\|\Qmat\|}[H(Y_1|q)-p(x_1=1|q)H(Y_2|U,q)]\\
&=&\sum_{q=1}^{\|\Qmat\|}[h_2(p_{1,q}p_{2,q}^y|q)-p(x_1=1|q)r_q]
\eqn
and
\bqn
R_2-\epsilon_2&\leq& I(X_2;Y_2|Q)\\
&=&H(Y_2|Q)-H(Y_2|X_2Q)\\
&=&h_2(p_{2,q}^y)-h_2(\epsilon).
\eqn
Compared with the expressions in \cite[Eqs.~(15) and (16)]{Etkin&Ordentlich:2011ISIT}, the only difference is the constant $h_2(\epsilon)$, which does not affect the optimization. Therefore, the optimization process there can be directly applied here. It follows that the capacity region of this channel is the convex hull of $\Rmat^\prime$, where
\bqn
\Rmat^\prime = \bigcup_{0\leq p_1,p_2\leq 1}\left\{(R_1,R_2):R_1\leq I(X_1;Y_1)=h_2(p_1p_{y_2})-p_1h_2(p_{y_2}),R_2\leq I(X_2;Y_2)=h_2(p_{y_2})-h_2(\epsilon)\right\},
\eqn
where $p_{y_2}=\epsilon(1-p_2)+(1-\epsilon)p_2$. Clearly, the sum-rate capacity is $\displaystyle\max_{p_1p_2}\left\{(p_1p_{y_2})+(1-p_1)h_2(p_{y_2})-h_2(\epsilon)\right\}$.
\end{example}
\begin{example}
$\|\Xmat_1\|=\|\Xmat_2\|=\|\Ymat_2\|=2$, $\|\Ymat_1\|=3$.
\bqn
Y_1&=&\left\{\begin{array}{cl}X_1\oplus Y_2,& \textrm{with probability } 1-\delta\\ e, & \textrm{with probability } \delta\end{array},\right.\\
Y_2&=&X_2\oplus V_1,
\eqn
where $V_1\sim \mbox{Bern}(\epsilon)$. Clearly, $Y_1$ is the output of a erasure channel with input $X_1\oplus Y_2$ and erasure proability $\delta$.
Define $Y_2'=X_1\oplus Y_2$. Thus, the DMIC with inputs $X_1, X_2$ and outputs $Y_1,Y_2'$ is a degraded DMIC. The capacity region of this degraded DMIC has been solved by Liu and Ulukus \cite{Liu&Ulukus:08IT}, and can be expressed as
\bqn
\Rmat_I = \overline{co}\left\{\bigcup_{p(x_1)p(x_2)}\left((R_1,R_2):R_1\leq I(X_1;Y_1),R_2\leq I(X_2;Y_2'|X_1)\right)\right\}.
\eqn
The corresponding capacity region for the DMZIC is 
\bqn
\Rmat_Z=\overline{co}\left\{\bigcup_{p(x_1)p(x_2)}\left((R_1,R_2):R_1\leq I(X_1;Y_1),R_2\leq I(X_2;Y_2)\right)\right\}.
\eqn
\end{example}
That $\Rmat_Z$ being the capacity region comes from the fact that $I(X_2;Y_2'|X_1)=I(X_2;Y_2)$ while $\Rmat_I$ is naturally an outer-bound. 
\begin{example}\label{ex:5}
Let $\|\Xmat_1\|=\|\Xmat_2\|=\|\Ymat_1\|=\|\Ymat_2\|=2$ and the channel transition probability be given by
\bqn
p(y_1y_2|x_1x_2)=p(y_2|x_2)p(y_1|x_1y_2),
\eqn
where $p(y_2|x_2)$ and $p(y_1|x_1y_2)$ are specified in Table \ref{table:ex1}.
\begin{table}[htp]
\centering
\caption{Channel Transition Probabilities}\label{table:ex1}
		\begin{tabular}{|c|c|c||c|c|c|}
  \hline
   $p(y_2|x_2)$ & $y_2=0$ & $y_2=1$ & $p(y_1|x_1y_2)$ & $y_1=0$ & $y_1=1$\\\hline
	 $x_2=0$ & $.1$ & $.9$ & $x_1y_2=00$ or $11$ & $.75$ & $.25$\\\hline
	 $x_2=1$ & $.9$ & $.1$ & $x_1y_2=01$ or $10$ & $0$ & $1$\\\hline
		\end{tabular}
\end{table}

By Theorem \ref{thm:DMZIC_CR}, the sum-rate capacity is
\bqn
\Cmat_{sum} = \max_{p(x_1)p(x_2)} I(X_1;Y_1)+I(X_2;Y_2)\approx.531.
\eqn
In addition, a simple outer bound can be constructed as follows
\bqa
R_1&\leq&I(X_1;Y_1|X_2),\label{eq:simple_ob_start}\\
R_2&\leq&I(X_2;Y_2),\\
R_1+R_2&\leq&I(X_1;Y_1)+I(X_2;Y_2).\label{eq:simple_ob_end}
\eqa

We now use Theorem \ref{thm:outer-bound} to obtain a new outer bound. Construct $Y_2'$ as follows
\bqn
Y_2'=\left\{\begin{array}{l}0,\textrm{     if }x_1y_2 = 00\textrm{ or }11,\\1, \textrm{   otherwise}.\end{array}\right.
\eqn
Then $p(y_2'|x_1x_2)$ is given in Table~\ref{table:y_2'}.
\begin{table}[htp]
\centering
\caption{$P(Y_2'|X_1X_2)$}\label{table:y_2'}
\begin{tabular}{|c|c|c|}
\hline
$p(y_2'|x_1x_2)$ & $y_2'=0$ & $y_2'=1$\\\hline
$x_1x_2=00$ & $.1$ & $.9$\\\hline
$x_1x_2=01$ & $.9$ & $.1$\\\hline
$x_1x_2=10$ & $.9$ & $.1$\\\hline
$x_1x_2=11$ & $.1$ & $.9$\\\hline
\end{tabular}
\end{table}

Using Theorem \ref{thm:outer-bound}, the capacity region of the DMZIC is outer-bounded by  that of the associated discrete memoryless degraded broadcast channel:
\bqn
\Rmat_{OB}=\overline{co}\left\{\bigcup_{p(u)p(x_1x_2|u)}(R_1,R_2)\left|\begin{array}{l}R_1\leq I(U;Y_1),\\ R_2\leq I(X_1X_2;Y_2'|U)\end{array}\right.\right\},
\eqn
Let $R_2$ to be fixed at $x$, then
\bqn
\max_{R_2=x}R_1 &=& \max_{H(Y_2'|U)=x+h_2(.1)} H(Y_1)-H(Y_1|U)\\
&\leq& \log(|\Ymat_1|)-f_T(x+h_2(.1)),
\eqn
where $f_T(\cdot)$ is a function defined by Witsenhausen and Wyner \cite{Witsenhausen&Wyner:75IT}.
Fig.~\ref{fig:outer-bound} depicts the new outer-bound specified by
\bqa
\Rmat_{OB}'=\left\{(R_1,R_2)|R_1\leq \log|\Ymat_1|-f_T(x+h_2(.1)), R_2\leq x\right\}.
\eqa
This new outer-bound significantly improves upon the simple outer-bound (\ref{eq:simple_ob_start})-(\ref{eq:simple_ob_end}).

\begin{figure}[htp]
\centerline{\begin{psfrags} \psfrag{R1}[c]{$R_1$}
\psfrag{R2}[c]{$R_2$} \scalefig{.75}\epsfbox{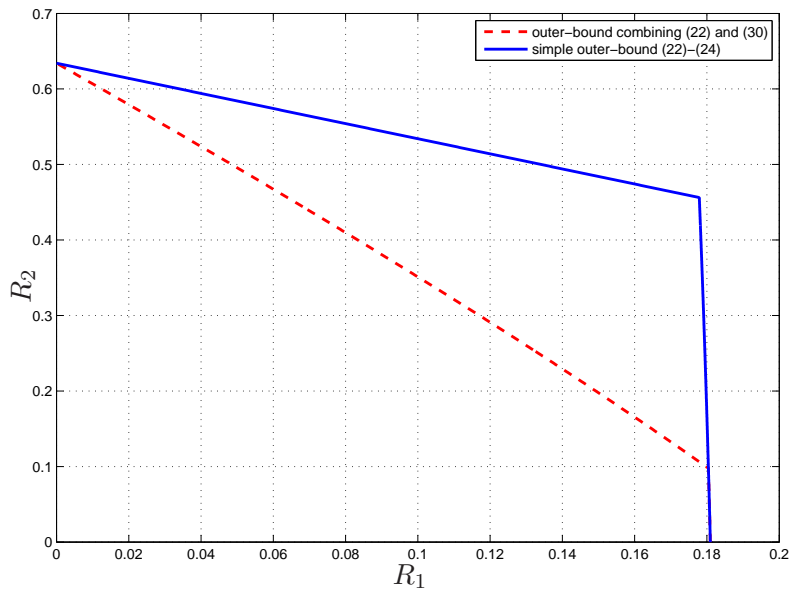}
\end{psfrags}}
\caption{\label{fig:outer-bound} Comparison of the outer-bounds.}
\end{figure}
\end{example}

\section{The DMIC with Mixed Interference}\label{sec:mixed}
For the GIC with mixed interference ($a\leq 1$ and $b\geq 1$ in (\ref{eq:GIC_y1}) and (\ref{eq:GIC_y2})), one can construct an equivalent GIC with degradedness defined by the Markov chain $X_2-(X_1,Y_2)-Y_1$:
\bqn
Y_1'&=&(1-ab)X_1+aY_2+Z_1',\\
Y_2&=&bX_1+X_2+Z_2,
\eqn
where $Z_1'\sim \Nmat(0,1-a^2)$.
This motivates us to define DMIC with mixed interference in an analogous fashion, which leads directly to its sum-rate capacity described in Theorem \ref{thm:mixed_cr}.
\begin{definition}
A DMIC is said to have \textit{mixed interference} if it satisfies the Markov chain
\small
\bqa\label{eq:mixed_cond1}
X_2-(X_1,Y_2)-Y_1
\eqa 
\normalsize and
\small
\bqa\label{eq:mixed_cond2}
I(X_1;Y_1|X_2)\leq I(X_1;Y_2|X_2)
\eqa
\normalsize
for all possible product distributions on $\Xmat_1\times\Xmat_2$.
\end{definition}

\begin{theorem}\label{thm:mixed_cr}
The sum-rate capacity of a DMIC with mixed interference, i.e., one that satisfies the two conditions (\ref{eq:mixed_cond1}) and (\ref{eq:mixed_cond2}), is
\small
\bqa\label{eq:dmic_mixed_sc}
C_{\mbox{sum}}=\max_{p(x_1)p(x_2)}\left\{I(X_2;Y_2|X_1)+\min\{I(X_1;Y_1),I(X_1;Y_2)\}\right\}.
\eqa
\end{theorem}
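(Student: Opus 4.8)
The plan is to prove (\ref{eq:dmic_mixed_sc}) by matching an achievability scheme with a converse built from two single-letter bounds that hold simultaneously for the same code. For achievability I would use the scheme suggested by the Gaussian analogy, with independent codebooks drawn from some $p(x_1)p(x_2)$. Transmitter $1$ encodes $W_1$ at a rate low enough to be decoded at \emph{both} receivers by single-user detection, i.e. $R_1\leq\min\{I(X_1;Y_1),I(X_1;Y_2)\}$. Receiver $1$ decodes $W_1$ treating $X_2$ as noise (requiring $R_1\leq I(X_1;Y_1)$), while receiver $2$ first decodes $W_1$ treating $X_2$ as noise (requiring $R_1\leq I(X_1;Y_2)$), cancels it, and then decodes $W_2$ at rate $R_2\leq I(X_2;Y_2|X_1)$. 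Standard joint-typicality arguments make this rigorous and deliver the sum rate $I(X_2;Y_2|X_1)+\min\{I(X_1;Y_1),I(X_1;Y_2)\}$.

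For the converse the essential point is to produce two bounds tied to the \emph{same} induced input distribution. Starting from Fano's inequality, handing $X_1^n$ to receiver $2$ as a genie gives the trivial pair $R_1\leq I(X_1;Y_1|Q)$ and $R_2\leq I(X_2;Y_2|X_1Q)$, hence
\[
R_1+R_2\leq I(X_1;Y_1|Q)+I(X_2;Y_2|X_1Q).
\]
The second, harder bound exploits the mixed condition (\ref{eq:mixed_cond2}): since $W_1$ is decodable at receiver $1$, the strong-interference condition forces $W_1$ to be decodable at receiver $2$ as well, so receiver $2$ can decode \emph{both} messages. A Costa--El Gamal-type argument then yields the multiple-access bound $R_1+R_2\leq I(X_1X_2;Y_2|Q)=I(X_1;Y_2|Q)+I(X_2;Y_2|X_1Q)$. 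Taking the minimum of the two,
\[
R_1+R_2\leq I(X_2;Y_2|X_1Q)+\min\{I(X_1;Y_1|Q),\,I(X_1;Y_2|Q)\}.
\]

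I expect two steps to be the main obstacles. The first is the strong-interference lemma, namely promoting the per-letter inequality (\ref{eq:mixed_cond2}), assumed for all product inputs, to the $n$-letter statement $I(X_1^n;Y_1^n|X_2^n)\leq I(X_1^n;Y_2^n|X_2^n)$ that certifies decodability of $W_1$ at receiver $2$; this is precisely the delicate core of the Costa--El Gamal converse and must be reproduced in this setting, with the degraded structure (\ref{eq:mixed_cond1}) entering to control how receiver $1$'s decoding constrains receiver $2$. The second and more serious obstacle is eliminating the time-sharing variable $Q$ to recover the clean form (\ref{eq:dmic_mixed_sc}). Because $\min\{\overline{a},\overline{b}\}\geq\overline{\min\{a,b\}}$, one cannot push the minimum inside the $Q$-average, so a direct per-letter reduction fails. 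I would instead write $\min\{\cdot,\cdot\}$ as $\min_{\lambda\in[0,1]}$ of the convex combination $I(X_2;Y_2|X_1)+\lambda I(X_1;Y_1)+(1-\lambda)I(X_1;Y_2)$ and invoke a minimax (saddle-point) argument to interchange the $\max$ over inputs with the $\min$ over $\lambda$, thereby showing the $\min$-of-averages bound does not exceed $\max_{p(x_1)p(x_2)}\min\{\cdot,\cdot\}$; equivalently, one must verify that time-sharing does not enlarge the sum rate. Establishing this minimax equality, which matches the converse to the achievable sum rate, is where I anticipate the real work to lie.
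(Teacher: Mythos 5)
Your achievability scheme and the two-bound outline of the converse coincide with the paper's, and your second (multiple-access) bound is handled exactly as the paper handles its bound (\ref{eq:mixed_sc1}): reproduce Costa--El Gamal's argument, whose delicate core is indeed the $n$-letter promotion of (\ref{eq:mixed_cond2}) that you single out. Note, however, that this argument needs \emph{only} (\ref{eq:mixed_cond2}); the degradedness (\ref{eq:mixed_cond1}) plays no role there, contrary to your description. The genuine gap is in your first bound. You claim that handing $X_1^n$ to receiver $2$ yields the ``trivial pair'' $R_1\leq I(X_1;Y_1|Q)$ and $R_2\leq I(X_2;Y_2|X_1Q)$. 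The $R_2$ half is fine, but the $R_1$ half is not a valid converse step: Fano gives $nR_1\leq I(X_1^n;Y_1^n)+n\epsilon$, and $I(X_1^n;Y_1^n)=\sum_i\left(H(Y_{1i}|Y_1^{i-1})-H(Y_{1i}|Y_1^{i-1}X_1^n)\right)$ cannot be single-letterized to $\sum_i I(X_{1i};Y_{1i})$, because the inequality one would need, $H(Y_{1i}|Y_1^{i-1}X_1^n)\geq H(Y_{1i}|X_{1i})$, runs the wrong way: the past outputs $Y_1^{i-1}$ carry information about the interfering codeword $X_2^n$, hence about $X_{2i}$. Treating interference as noise is an encoding choice, not something a converse may impose. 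This is precisely where the paper does its real work in proving (\ref{eq:mixed_sc2}): it bounds $I(X_1^n;Y_1^n)+I(X_2^n;Y_2^n|X_1^n)$ \emph{jointly}, introduces $U_i=(X_1^{i-1},X_{1,i+1}^n,Y_2^{i-1})$, and uses the Markov chain (\ref{eq:mixed_cond1}) twice --- once to drop $Y_1^{i-1}$ from the conditioning, and once through the implied inequality $I(U;Y_1|X_1)\leq I(U;Y_2|X_1)$ of (\ref{eq:mixed_cond3}) --- so that the leftover term $I(U_i;Y_{1i}|X_{1i})$ is transferred to receiver $2$'s side, where it recombines with $I(X_{2i};Y_{2i}|U_iX_{1i})$ into $I(X_{2i};Y_{2i}|X_{1i})$. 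In short, you have assigned condition (\ref{eq:mixed_cond1}) to the wrong bound and left the bound that actually needs it without a proof; the hard half of the converse is missing from your proposal.

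Your second worry --- that the minimum cannot be pushed inside the $Q$-average, so the two averaged bounds do not immediately yield the single-letter max-of-min expression --- is a legitimate observation; the paper itself is very brisk at this step, passing from the two $n$-letter bounds (\ref{eq:mixed_sc1}) and (\ref{eq:mixed_sc2}) directly to a per-letter minimum before invoking time sharing. But your proposed remedy is only a plan: the saddle-point interchange of $\max_{p(x_1)p(x_2)}$ with $\min_{\lambda}$ is asserted, not proved, and it faces a concrete obstacle, namely that the set of product input distributions is not convex, so Sion-type minimax theorems do not apply off the shelf. As it stands, then, your proposal has one bound whose stated derivation fails (and whose correct derivation is the substance of the paper's proof) and a final single-letterization step that is recognized but not resolved.
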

\normalsize
\begin{proof}
In order to achieve this sum rate, user $1$ transmits its message at a rate such that both receivers can decode it by treating the signal from user $2$ as noise; user $2$ transmits at the interference-free rate since receiver $2$ is able to subtract the interference from user $X_1$.

For the converse, we prove the following two sum-rate bounds separately:
\small
\bqa
n(R_1+R_2)&\leq& \sum_{i=1}^nI(X_{1i}X_{2i};Y_{2i}),\label{eq:mixed_sc1}\\
n(R_1+R_2)&\leq& \sum_{i=1}^nI(X_{1i};Y_{1i})+I(X_{2i};Y_{2i}|X_{1i}).\label{eq:mixed_sc2}
\eqa
\normalsize
For (\ref{eq:mixed_sc1}), the derivation follows the same steps as Costa and El Gamal's result\cite{Costa&ElGamal:87IT}.
For (\ref{eq:mixed_sc2}), we apply similar techniques used in the proof of Theorem \ref{thm:DMZIC_CR}. First, notice that (\ref{eq:mixed_cond1}) implies
\small
\bqa
I(U;Y_1|X_1)\leq I(U;Y_2|X_1)\label{eq:mixed_cond3}
\eqa
\normalsize
for any $U$ whose joint distribution with $X_1,X_2,Y_1,Y_2$ is
\small
\bqa
p(u,x_1,x_2,y_1,y_2)=p(u)p(x_1x_2|u)p(y_1y_2|x_1x_2). \label{eq:u}
\eqa 
\normalsize Therefore,
\small
\bqn
n(R_1+R_2)-n \epsilon &\stackrel{(a)}{\leq}&\!\!\!\!\!\!I(X_1^n;Y_1^n)+I(X_2^n;Y_2^n|X_1^n)\\
&=&\!\!\!\!\!\!\sum_{i=1}^n \left(H(Y_{1i}|Y_1^{i-1})-H(Y_{1i}|Y_1^{i-1}X_1^n)+H(Y_{2i}|Y_2^{i-1}X_1^n)-H(Y_{2i}|Y_2^{i-1}X_2^nX_1^n)\right)\\
&\stackrel{(b)}{\leq}&\!\!\!\!\!\!\sum_{i=1}^n \left(H(Y_{1i})-H(Y_{1i}|Y_1^{i-1}X_1^nY_2^{i-1})+H(Y_{2i}|U_iX_{1i})-H(Y_{2i}|X_{2i}X_{1i}U_i)\right)\\
&=&\!\!\!\!\!\!\sum_{i=1}^n (I(U_iX_{1i};Y_{1i})+I(X_{2i};Y_{2i}|U_iX_{1i}))\\
&=&\!\!\!\!\!\!\sum_{i=1}^n (I(X_{1i};Y_{1i})+I(U_i;Y_{1i}|X_{1i})+I(X_{2i};Y_{2i}|U_iX_{1i}))\\
&\stackrel{(c)}{\leq}&\!\!\!\!\!\!\sum_{i=1}^n (I(X_{1i};Y_{1i})+I(U_i;Y_{2i}|X_{1i})+I(X_{2i};Y_{2i}|U_iX_{1i}))\\
&\stackrel{(d)}{=}&\!\!\!\!\!\!\sum_{i=1}^n (I(X_{1i};Y_{1i})+I(X_{2i};Y_{2i}|X_{1i})),
\eqn
\normalsize
where $(a)$ is because of the independence between $X_1^n$ and $X_2^n$; $(b)$ is from the fact that conditioning reduces entropy and by defining $U_i\triangleq (X_1^{i-1}X_{1,i+1}^n, Y_2^{i-1})$; $(c)$ is from (\ref{eq:mixed_cond3}); and $(d)$ is because of the memoryless property of the channel and (\ref{eq:u}).
From (\ref{eq:mixed_sc1}) and (\ref{eq:mixed_sc2}), we have
\small
\begin{eqnarray}
R_1+R_2&\leq& \sum_{i=1}^n{\min\{I(X_{1i}X_{2i};Y_{2i}), I(X_{1i};Y_{1i})+I(X_{2i};Y_{2i}|X_{1i})\}}.
\end{eqnarray}
\normalsize
Finnally, by introducing the time-sharing  random variable $Q$, one obtains (\ref{eq:dmic_mixed_sc}) as desired.
\end{proof}

We give the following example where the obtained sum-rate capacity helps determine the capacity region of a DMIC.

\begin{example}
	Consider the following deterministic channel:
	\small
	\bqn
	Y_1 &=& X_1\cdot X_2,\\
	Y_2 &=& X_1 \oplus X_2,
	\eqn
	\normalsize
	where the input and output alphabets $\Xmat_1=\Xmat_2=\Ymat_1=\Ymat_2 = \{0,1\}$. Notice that this channel does not satisfy the condition of the deterministic interference channel in \cite{ElGamal&Costa:82IT}.
	Obviously, the Markov chain (\ref{eq:mixed_cond1}) holds. Moreover,
	\small
	\bqn
	I(X_1;Y_1|X_2)\!\!\!\!\!\!&=&\!\!\!\!\!\!H(Y_1|X_2)= p(x_2=1)H(X_1),\\
	I(X_1;Y_2|X_2)\!\!\!\!\!\!&=&\!\!\!\!\!\!H(Y_2|X_2) = H(X_1).
	\eqn
	\normalsize
	Therefore,
	\small
	\bqn
	I(X_1;Y_1|X_2)\leq I(X_1;Y_2|X_2),
	\eqn
	\normalsize
	for all possible input product distributions on $\Xmat_1\times\Xmat_2$.
	Thus, this is a DMIC with mixed interference. On applying Theorem \ref{thm:mixed_cr}, we compute the sum-rate capacity to be
	\small
	\bqn
	\Cmat_{\mbox{sum}}\!\!&=&\!\!\max_{p(x_1)p(x_2)}\left[\min(I(X_1X_2;Y_2),I(X_1;Y_1)+I(X_2;Y_2|X_1))\right]\\
	&=&1.
	\eqn
\normalsize
Given that $(1,0)$ and $(0,1)$ are both trivially achievable, the above sum-rate capacity leads to the capacity region for this DMIC to be $\{(R_1,R_2):R_1+R_2\leq 1\}$.
\end{example}
\section{Conclusion}\label{sec:conclusion}
In this paper, we have derived the sum-rate capacity for a class of discrete memoryless interference channels whose channel property resembles that of the Gaussian interference channel with one-sided and weak interference. Capacity outer bounds are also derived for this class of channels. The same technique is then applied to obtain the sum-rate capacity of discrete memoryless interference channels with mixed interference. For both cases, the capacity expressions as well as the encoding schemes that achieve the sum-rate capacity are analogous to the Gaussian interference channel counterpart. These results allow us to obtain capacity results for several new discrete memoryless interference channels.

\begin{appendix}
\subsection{Counter Example for the Equivalence between the Two Different Conditions}\label{apx:example}
This example explains that a DMZIC that satisfies the mutual information condition (\ref{eq:DMZIC_weakextention}) does not necessarily imply the Markov chain relationship (\ref{eq:weakcond2}).

Let $f_{ij}$ represent $p(y_1=1|x_1=i, x_2=j)$, $g_{j}$ represent $p(y_2=1|x_2=j)$, $p_i = \Pr\{X_i=1\}$, and $\bar{p}_i=1-p_i$ $(i,j\in \{0,1\})$.
From the mutual information condition (\ref{eq:DMZIC_weakextention})
\bqn
I(X_2;Y_2)&\geq& I(X_2;Y_1|X_1),
\eqn
we have
\bqn
H(Y_2)-H(Y_2|X_2) &\geq& H(Y_1|X_1)-H(Y_1|X_1,X_2)\\
h_2(\bar{p}_2g_0+p_2g_1)-p_2h_2(g_1)-\bar{p}_2h_2(g_0) &\geq& \bar{p}_1h_2(\bar{p}_2f_{00}+p_2f_{01})+p_1h_2(\bar{p}_2f_{10}+p_2f_{11})\\&&-\bar{p}_1\bar{p}_2h_2(f_{00})-\bar{p}_1p_2h_2(f_{01})-p_1\bar{p}_2h_2(f_{10})-p_1p_2h_2(f_{11})
\eqn
Upon obtaining the above inequality, one can make specific choices of $\{f_{ij}\}$ and $\{g_{j}\}$ to make the above inequality hold for all possible $p_1$ and $p_2$ range from $0$ to $1$. For example, it is easy to verify that a valid choice is
\bqn
f_{00}&=&.1 , f_{01}=.3, f_{10} = .5, f_{11} = .25,\\
g_0 &=& .1, g_1 = .5.
\eqn

In the following, we prove by contradiction that this channel does not satisfy the markov chain condition (\ref{eq:weakcond2}).

Suppose that the markov chain (\ref{eq:weakcond2}) is satisfied,
\bqn
p(y_1|x_1x_2y_2) = p(y_1|x_1y_2).
\eqn
Then we would have,
\bqn
p(y_1|x_1x_2)=\sum_{y_2}p(y_1y_2|x_1x_2)=\sum_{y_2}p(y_2|x_2)p(y_1|x_1y_2).
\eqn
Solving this equation, we get
\bqn
p(y_1=1|x_1=1,y_2=1)=-\frac{1}{16},
\eqn
which contradicts the fact that channel transit probability can never be negative.
\end{appendix}

\end{document}